\theoremstyle{plain}
\numberwithin{equation}{section}
\newtheorem{thm}{Theorem}[section]
\newtheorem{lem}[thm]{Lemma}
\newtheorem{cor}[thm]{Corollary}
\newcounter{cond}
\def\real{\mathbb R}
\newcommand{\ascript}{{\mathcal A}}
\newcommand{\bscript}{{\mathcal B}}
\newcommand{\fscript}{{\mathcal F}}
\newcommand{\kscript}{{\mathcal K}}
\newcommand{\mscript}{{\mathcal M}}
\newcommand{\oscript}{{\mathcal O}}
\newcommand{\tscript}{{\mathcal T}}
\newcommand{\vscript}{{\mathcal V}}
\newcommand{\mtilde}{\widetilde{M}}
\newcommand{\mhat}{\widehat{M}}
\newcommand{\nhat}{\widehat{N}}
\newcommand{\shat}{\widehat{s}}
\newcommand{\cupdot}{\dot{\cup}}
\newcommand{\rmtr}{\mathrm{tr}}
\newcommand{\ab}[1]{\left|#1\right|}
\newcommand{\doubleab}[1]{\left\|#1\right\|}
\newcommand{\brac}[1]{\left\{#1\right\}}
\newcommand{\paren}[1]{\left(#1\right)}
\newcommand{\sqbrac}[1]{\left[#1\right]}
\newcommand{\sqparen}[1]{{\left[#1\right)}}
\newcommand{\elbows}[1]{{\left\langle#1\right\rangle}}
\begin{document}

\title{COMPATIBILITY FOR\\
PROBABILISTIC THEORIES}
\author{Stan Gudder\\ Department of Mathematics\\
University of Denver\\ Denver, Colorado 80208\\
sgudder@du.edu}
\date{}
\maketitle

\begin{abstract}
We define an index of compatibility for a probabilistic theory (PT). Quantum mechanics with index 0 and classical probability theory with index 1 are at the two extremes. In this way, quantum mechanics is at least as incompatible as any PT. We consider a PT called a concrete quantum logic that may have compatibility index strictly between 0 and 1, but we have not been able to show this yet. Finally, we show that observables in a PT can be represented by positive, vector-valued measures.
\end{abstract}

\section{Observables in Probabilistic Theories}  
This paper is based on the stimulating article \cite{bhs12} by Busch, Heinosaari and Schultz. The authors should be congratulated for introducing a useful new tool for measuring the compatibility of a probabilistic theory (PT). In this paper, we present a simpler, but coarser, measure of compatibility that we believe will also be useful.

A \textit{probabilistic theory} is a $\sigma$-convex subset $\kscript$ of a real Banach space $\vscript$. That is, if 
$0\le\lambda _i\le 1$ with $\sum\lambda _i=1$ and $v_i\in\kscript$, $i=1,2,\ldots$, then $\sum\lambda _iv_i$ converges in norm to an element of $\kscript$. We call the elements of $\kscript$ \textit{states}. There is no loss of generality in assuming that $\kscript$ generates $\vscript$ in the sense that the closed linear hull of $\kscript$ equals $\vscript$. Denote the collection of Borel subsets of $\real ^n$ by $\bscript (\real ^n)$ and the set of probability measures on $\bscript (\real ^n)$ by
$\mscript (\real ^n)$. If $\kscript$ is a PT, an $n$-\textit{dimensional observable} on $\kscript$ is a $\sigma$-affine map
$M\colon\kscript\to\mscript (\real ^n)$. We denote the set of $n$-dimensional observables by $\oscript _n(\kscript )$ and write
$\oscript (\kscript )=\oscript _1(\kscript )$. We call the elements of $\oscript (\kscript )$ \textit{observables}. For
$M\in\oscript (\kscript )$, $s\in\kscript$, $A\in\bscript (\real )$, we interpret $M(s)(A)$ as the probability that $M$ has a value in $A$ when the system is in state $s$.

A set of observables $\brac{M_1,\ldots ,M_n}\subseteq\oscript (\kscript )$ is \textit{compatible} or \textit{jointly measurable} if there exists an $M\in\oscript _n(\kscript )$ such that for every $A\in\bscript (\real )$ and every $s\in\kscript$ we have
\begin{align*}
M&(s)(A\times\real\times\cdots\times\real )=M_1(s)(A)\\
M&(s)(\real\times A\times\real\times\cdots\times\real)=M_2(s)(A)\\
 \vdots&\\
M&(s)(\real\times\real\times\cdots\times\real\times A)=M_n(s)(A)
\end{align*}
In this case, we call $M$ a \textit{joint observable} for $\brac{M_1,\ldots ,M_n}$ and we call $\brac{M_1,\ldots ,M_n}$ the
\textit{marginals} for $M$. It is clear that if $\brac{M_1,\ldots ,M_n}$ is compatible, then any proper subset is compatible. However, we suspect that the converse is not true. If a set of observables is not compatible we say it is
\textit{incompatible}.

It is clear that convex combinations of observables give an observable so $\oscript (\kscript )$ forms a convex set. In the same way, $\oscript _n(\kscript )$ is a convex set. Another way of forming new observables is by taking functions of an observable. If $f\colon\real\to\real$ is a Borel function and $M\in\oscript (\kscript )$, the observable $f(M)\colon\kscript\to\mscript (\real )$ is defined by $f(M)(s)(A)=M(s)\paren{f^{-1}(A)}$ for all $s\in\kscript$, $A\in\bscript (\real )$.

\begin{thm}    
\label{thm11}
If $M_1,M_2\in\oscript (\kscript )$ are functions of a single observable $M$, then $M_1$, $M_2$ are compatible.
\end{thm}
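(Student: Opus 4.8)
The plan is to exhibit an explicit joint observable built by pairing the two defining Borel functions. Write $M_1 = f(M)$ and $M_2 = g(M)$ for Borel functions $f,g\colon\real\to\real$, and define $h\colon\real\to\real ^2$ by $h(x) = \paren{f(x),g(x)}$. Since each coordinate of $h$ is Borel measurable and $\bscript (\real ^2)$ is the product $\sigma$-algebra $\bscript (\real )\otimes\bscript (\real )$, the map $h$ is Borel measurable, so $h^{-1}(A)\in\bscript (\real )$ whenever $A\in\bscript (\real ^2)$. I would then define $N\colon\kscript\to\mscript (\real ^2)$ by $N(s)(A) = M(s)\paren{h^{-1}(A)}$ for $s\in\kscript$ and $A\in\bscript (\real ^2)$; this is the two-dimensional analogue of the function-of-an-observable construction already introduced for maps $\real\to\real$.

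First I would check that $N$ lands in $\oscript _2(\kscript )$. For fixed $s$, the set function $A\mapsto M(s)\paren{h^{-1}(A)}$ is a probability measure on $\bscript (\real ^2)$ because $M(s)$ is a probability measure and taking preimages commutes with countable unions, intersections, and complements, while the total mass is $M(s)\paren{h^{-1}(\real ^2)} = M(s)(\real ) = 1$. That $N$ is $\sigma$-affine follows directly from the $\sigma$-affineness of $M$: if $s=\sum\lambda _is_i$ is a $\sigma$-convex combination in $\kscript$, then for each $A\in\bscript (\real ^2)$ we have $N(s)(A) = M(s)\paren{h^{-1}(A)} = \sum _i\lambda _iM(s_i)\paren{h^{-1}(A)} = \sum _i\lambda _iN(s_i)(A)$, with the relevant convergence inherited from that of $M$.

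Finally I would verify the marginal conditions. For $A\in\bscript (\real )$ and $s\in\kscript$, observe $h^{-1}(A\times\real ) = f^{-1}(A)\cap g^{-1}(\real ) = f^{-1}(A)$, so
\[
N(s)(A\times\real ) = M(s)\paren{f^{-1}(A)} = f(M)(s)(A) = M_1(s)(A).
\]
Symmetrically, $h^{-1}(\real\times A) = g^{-1}(A)$ gives $N(s)(\real\times A) = M_2(s)(A)$. Hence $N$ is a joint observable for $\brac{M_1,M_2}$, so $M_1$ and $M_2$ are compatible.

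There is no genuine obstacle here; the argument is essentially a bookkeeping exercise in pushing the single measure $M(s)$ forward along the pair map $h$. The one point requiring a moment's care is the measurability of $h$ as a map into $\real ^2$ — needed so that $N(s)$ is defined on all of $\bscript (\real ^2)$ rather than merely on rectangles — which is immediate from $\bscript (\real ^2) = \bscript (\real )\otimes\bscript (\real )$ together with measurability of each coordinate. The same construction plainly extends to any finite family $\brac{f_1(M),\ldots ,f_k(M)}$ by using $x\mapsto\paren{f_1(x),\ldots ,f_k(x)}$, which is presumably the intended scope of the statement.
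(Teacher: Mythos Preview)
Your proof is correct and constructs the same joint observable as the paper: on rectangles, $h^{-1}(A\times B)=f^{-1}(A)\cap g^{-1}(B)$, which is exactly the paper's $\mtilde(s)(A\times B)=M(s)\sqbrac{f^{-1}(A)\cap g^{-1}(B)}$. The difference is purely in how the measure on $\bscript(\real^2)$ is obtained. The paper defines $\mtilde(s)$ only on measurable rectangles and then invokes the Hahn extension theorem to obtain a measure on all of $\bscript(\real^2)$; you instead recognize the construction as the pushforward $M(s)\circ h^{-1}$, which is automatically a Borel probability measure once $h$ is Borel measurable. Your route is slightly more elementary --- it sidesteps the extension theorem entirely and makes the $\sigma$-affineness of $N$ transparent --- while the paper's version has the virtue of being phrased directly in terms of the rectangle values one actually needs for the marginal check. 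Both arrive at the same object, and your closing remark about the obvious extension to $k$ functions is apt.
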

\begin{proof}
Suppose $M_1=f(M)$, $M_2=g(M)$ where $f$ and $g$ are Borel functions. For $A,B\in\bscript (\real )$, $s\in\kscript$ define
$\mtilde (s)$ on $A\times B$ by
\begin{equation*}
\mtilde (s)(A\times B)=M(s)\sqbrac{f^{-1}(A)\cap g^{-1}(B)}
\end{equation*}
By the Hahn extension theorem, $\mtilde (s)$ extends to a measure in $\mscript (\real ^2)$. Hence,
$\mtilde\in\oscript _2(\kscript )$ and the marginals of $\mtilde$ are $f(M)$ and $g(M)$. We conclude that $M_1=f(M)$ and $M_2=g(M)$ are compatible
\end{proof}

It follows from Theorem~\ref{thm11} that an observable is compatible with any Borel function of itself and in particular with itself. In a similar way we obtain the next result.

\begin{thm}    
\label{thm12}
If $M_1,M_2\in\oscript (\kscript )$ are compatible and $f$, $g$ are Borel functions, then $f(M_1)$ and $g(M_2)$ are compatible.
\end{thm}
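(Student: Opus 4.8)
The plan is to imitate the proof of Theorem~\ref{thm11}, now starting from an existing joint observable. Since $M_1,M_2$ are compatible, there is an $M\in\oscript _2(\kscript )$ whose marginals are $M_1$ and $M_2$. I would define $h\colon\real ^2\to\real ^2$ by $h(x,y)=\paren{f(x),g(y)}$; since $f$ and $g$ are Borel, $h$ is Borel measurable, so $h^{-1}(C)\in\bscript (\real ^2)$ for every $C\in\bscript (\real ^2)$. Then, for $s\in\kscript$ and $C\in\bscript (\real ^2)$, set
\begin{equation*}
\mtilde (s)(C)=M(s)\paren{h^{-1}(C)}.
\end{equation*}

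Next I would check that $\mtilde\in\oscript _2(\kscript )$. Each $\mtilde (s)$ is the pushforward of the probability measure $M(s)$ under $h$, hence $\mtilde (s)\in\mscript (\real ^2)$. Because $h^{-1}$ commutes with countable unions and complements and preserves disjointness, and because $M$ is $\sigma$-affine, the map $s\mapsto\mtilde (s)$ is $\sigma$-affine, so $\mtilde\in\oscript _2(\kscript )$. (If one prefers to mirror Theorem~\ref{thm11} verbatim, one can instead define $\mtilde (s)$ on a rectangle $A\times B$ by $\mtilde (s)(A\times B)=M(s)\sqbrac{f^{-1}(A)\cap (f^{-1}(A)\text{-side})}$—more precisely $M(s)\sqbrac{f^{-1}(A)\times g^{-1}(B)}$—and invoke the Hahn extension theorem; the two constructions agree since $f^{-1}(A)\times g^{-1}(B)=h^{-1}(A\times B)$.)

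It then remains to identify the marginals. For $A\in\bscript (\real )$ and $s\in\kscript$ one has $h^{-1}(A\times\real )=f^{-1}(A)\times\real$, so
\begin{equation*}
\mtilde (s)(A\times\real )=M(s)\sqbrac{f^{-1}(A)\times\real }=M_1(s)\paren{f^{-1}(A)}=f(M_1)(s)(A),
\end{equation*}
where the middle equality uses that $M_1$ is the first marginal of $M$. Symmetrically, $h^{-1}(\real\times A)=\real\times g^{-1}(A)$ yields $\mtilde (s)(\real\times A)=M_2(s)\paren{g^{-1}(A)}=g(M_2)(s)(A)$. Hence $\mtilde$ is a joint observable for $\brac{f(M_1),g(M_2)}$, so these two observables are compatible.

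The only point that needs genuine care is the verification that $\mtilde$ inherits $\sigma$-affineness (and well-definedness/countable additivity, if the Hahn route is used) from $M$; I expect this to be routine, since everything reduces to the fact that taking preimages under the fixed Borel map $h$ commutes with the relevant countable set operations, so I do not anticipate a real obstacle.
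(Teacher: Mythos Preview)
Your proof is correct and follows essentially the same approach as the paper: take a joint observable $M$ for $M_1,M_2$ and build a new joint observable by pulling back along $(f,g)$, then read off the marginals. Your pushforward formulation $\mtilde(s)(C)=M(s)\paren{h^{-1}(C)}$ is in fact a mild improvement over the paper's version, which defines $\mtilde(s)$ only on rectangles via $M(s)\sqbrac{f^{-1}(A)\times g^{-1}(B)}$ and then appeals to the Hahn extension theorem; as you observe, the two agree since $h^{-1}(A\times B)=f^{-1}(A)\times g^{-1}(B)$, and your route sidesteps the extension step entirely.
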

\begin{proof}
Since $M_1$, $M_2$ are compatible, they have a joint observable $M\in\oscript _2(\kscript )$. For $A,B\in\bscript (\real )$,
$s\in\kscript$ define $\mtilde (s)$ on $A\times B$ by
\begin{equation*}
\mtilde (s)(A\times B)=M(s)\sqbrac{f^{-1}(A)\times g^{-1}(B)}
\end{equation*}
As in the proof of Theorem~\ref{thm11}, $\mtilde (s)$ extends to a measure in $\mscript (\real ^2)$. Hence, $\mtilde\in\oscript (\kscript )$ and the marginals of $\mtilde$ are
\begin{align*}
\mtilde (s)(A\times\real )&=M(s)\sqbrac{f^{-1}(A)\times\real}=M_1(s)\sqbrac{f^{-1}(A)}=f(M_1)(s)(A)\\
\mtilde (s)(\real\times A)&=M(s)\sqbrac{\real\times g^{-1}(A)}=M_2(s)\sqbrac{g^{-1}(A)}=g(M_2)(s)(A)
\end{align*}
We conclude that $f(M_1)$ and $g(M_2)$ are compatible.
\end{proof}

The next result is quite useful and somewhat surprising.
\begin{thm}    
\label{thm13}
Let $M_i^j\in\oscript (\kscript )$ for $i=1,\ldots ,n$, $j=1,\ldots ,m$ and suppose $\brac{M_i^1,\ldots ,M_i^m}$ is compatible, $i=1,\ldots ,n$. If $\lambda _i\in\sqbrac{0,1}$ with $\sum\lambda _i=1$, $i=1,\ldots ,n$, then
\begin{equation*}
\brac{\sum _{i=1}^n\lambda _iM_i^1,\sum _{i=1}^n\lambda _iM_i^2,\ldots ,\sum _{i=1}^n\lambda _iM_i^m}
\end{equation*}
is compatible.
\end{thm}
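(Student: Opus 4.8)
The plan is to average joint observables. For each $i=1,\ldots,n$, compatibility of $\brac{M_i^1,\ldots,M_i^m}$ gives a joint observable $N_i\in\oscript_m(\kscript)$. For $1\le k\le m$ and $A\in\bscript(\real)$ write $A_{(k)}\in\bscript(\real^m)$ for the Borel set having $A$ in the $k$-th coordinate and $\real$ in every other coordinate, so that by definition of joint observable $N_i(s)(A_{(k)})=M_i^k(s)(A)$ for all $s\in\kscript$. Now set
\[
N=\sum_{i=1}^n\lambda_iN_i, \qquad\text{i.e.}\qquad N(s)=\sum_{i=1}^n\lambda_iN_i(s)\ \ (s\in\kscript).
\]

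First I would verify $N\in\oscript_m(\kscript)$. For each $s$, the measure $N(s)$ is a finite convex combination of the probability measures $N_i(s)\in\mscript(\real^m)$, hence is itself a probability measure on $\bscript(\real^m)$ because $\lambda_i\ge 0$ and $\sum\lambda_i=1$. That $N$ is $\sigma$-affine follows from the $\sigma$-affineness of each $N_i$: for a $\sigma$-convex combination $s=\sum_j\mu_js_j$ in $\kscript$ one gets $N(s)=\sum_i\lambda_i\sum_j\mu_jN_i(s_j)$, and interchanging the two sums --- legitimate since all the terms are nonnegative --- yields $\sum_j\mu_jN(s_j)$. This is precisely the convexity of $\oscript_m(\kscript)$ already noted in the text, so nothing new is required here.

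Then I would read off the marginals of $N$. For each $k$, every $s\in\kscript$ and $A\in\bscript(\real)$,
\[
N(s)(A_{(k)})=\sum_{i=1}^n\lambda_iN_i(s)(A_{(k)})=\sum_{i=1}^n\lambda_iM_i^k(s)(A)=\paren{\sum_{i=1}^n\lambda_iM_i^k}(s)(A).
\]
Hence $N$ is a joint observable for $\brac{\sum_i\lambda_iM_i^1,\ldots,\sum_i\lambda_iM_i^m}$, and that family is compatible.

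The point worth flagging is that this argument is genuinely easier than Theorems~\ref{thm11} and~\ref{thm12}: no Hahn extension is needed, since we are averaging $m$-dimensional observables that already exist rather than assembling one from prescribed one-dimensional marginals. The only care required is the (routine) check that $N(s)$ lands in $\mscript(\real^m)$ and that the sum interchange in the $\sigma$-affineness step is valid, both of which are immediate from positivity and $\sum\lambda_i=1$.
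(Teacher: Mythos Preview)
Your proof is correct and follows essentially the same approach as the paper: take the joint observables $N_i$ (the paper's $\widetilde{M}_i$), form their convex combination $N=\sum_i\lambda_iN_i\in\oscript_m(\kscript)$, and compute the marginals. You add a bit more detail on why $N$ is $\sigma$-affine, but the argument is identical in substance.
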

\begin{proof}
Let $\mtilde _i\in\oscript _m(\kscript )$ be the joint observable for $\brac{M_i^1,\ldots ,M_i^m}$, $i=1,\ldots ,n$. Then
$\mtilde\!=\!\sum _{i=1}^n\lambda _i\mtilde _i$ is an $m$-dimensional observable with marginals

\begin{align*}
\mtilde&(s)(A\times\real\times\cdots\times\real)=\sum _{i=1}^n\lambda _i\mtilde _i(s)(A\times\real\times\cdots\times\real )
  =\sum _{i=1}^n\lambda _iM_i^1(s)(A)\\
  \mtilde&(s)(\real\times A\times\real\times\cdots\times\real )
  =\sum _{i=1}^n\lambda _i\mtilde _i(s)(\real\times A\times\real\times\cdots\times\real )\\
  &\hskip 11pc =\sum _{i=1}^n\lambda _iM_i^2(s)(A)\\
   \vdots&\\
  \mtilde&(s)(\real\times\real\times\cdots\times\real\times A)
  =\sum _{i=1}^n\lambda _i\mtilde _i(s)(\real\times\real\times\cdots\times\real\times A)\\
  &\hskip 11pc =\sum _{I=1}^n\lambda _iM_i^m(s)(A)
\end{align*}
The result now follows
\end{proof}

\begin{cor}    
\label{cor14}
Let $M,N,P\in\oscript (\kscript )$ and $\lambda\in\sqbrac{0,1}$. If $M$ is compatible with $N$ and $P$, then $M$ is compatible with $\lambda N+(1-\lambda )P$.
\end{cor}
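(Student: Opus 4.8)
The plan is to obtain this as an immediate special case of Theorem~\ref{thm13}, with $n=2$ and $m=2$. I would set $M_1^1=M_2^1=M$, $M_1^2=N$, and $M_2^2=P$. By hypothesis the pair $\brac{M_1^1,M_1^2}=\brac{M,N}$ is compatible, and the pair $\brac{M_2^1,M_2^2}=\brac{M,P}$ is compatible, so the hypotheses of Theorem~\ref{thm13} are satisfied.

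Next I would take $\lambda _1=\lambda$ and $\lambda _2=1-\lambda$, so that $\lambda _1,\lambda _2\in\sqbrac{0,1}$ with $\lambda _1+\lambda _2=1$. Theorem~\ref{thm13} then yields that
\begin{equation*}
\brac{\lambda _1M_1^1+\lambda _2M_2^1,\ \lambda _1M_1^2+\lambda _2M_2^2}=\brac{\lambda M+(1-\lambda )M,\ \lambda N+(1-\lambda )P}
\end{equation*}
is compatible. The only remaining point is the trivial identity $\lambda M+(1-\lambda )M=M$ at the level of observables: for every $s\in\kscript$ and $A\in\bscript (\real )$ one has $\sqbrac{\lambda M+(1-\lambda )M}(s)(A)=\lambda M(s)(A)+(1-\lambda )M(s)(A)=M(s)(A)$, which follows directly from the pointwise definition of convex combinations in $\oscript (\kscript )$. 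Hence $\brac{M,\lambda N+(1-\lambda )P}$ is compatible, as claimed.

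I do not anticipate any real obstacle here; the statement is essentially a bookkeeping consequence of Theorem~\ref{thm13}. If one preferred to avoid citing Theorem~\ref{thm13}, an alternative is to build the joint observable by hand: take a joint observable $\mtilde _1\in\oscript _2(\kscript )$ for $\brac{M,N}$ and a joint observable $\mtilde _2\in\oscript _2(\kscript )$ for $\brac{M,P}$, form $\mtilde=\lambda\mtilde _1+(1-\lambda )\mtilde _2\in\oscript _2(\kscript )$, and compute that its first marginal is $\lambda M+(1-\lambda )M=M$ and its second marginal is $\lambda N+(1-\lambda )P$. The route through Theorem~\ref{thm13} is cleaner, however, since it makes this marginal computation unnecessary.
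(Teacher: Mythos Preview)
Your proof is correct and follows exactly the paper's own approach: apply Theorem~\ref{thm13} with $n=m=2$, $M_1^1=M_2^1=M$, $M_1^2=N$, $M_2^2=P$, and weights $\lambda ,1-\lambda$, then use $\lambda M+(1-\lambda )M=M$. The alternative direct construction you mention is precisely the content of the proof of Theorem~\ref{thm13} specialized to this case, so nothing is gained by it.
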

\begin{proof}
Since $\brac{M,N}$ and $\brac{M,P}$ are compatible sets, by Theorem~\ref{thm13}, we have that
$M=\lambda M+(1-\lambda )M$ is compatible with $\lambda N+(1-\lambda )P$.
\end{proof}

\section{Noisy Observables} 
If $p\in\mscript (\real )$, we define the \textit{trivial observable} $T_p\in\oscript (\kscript )$ by $T_p(s)=p$ for every $s\in\kscript$. A trivial observable represents noise in the system. We denote the set of trivial observables on $\kscript$ by $\tscript (\kscript )$. The set $\tscript (\kscript )$ is convex with
\begin{equation*}
\lambda T_p+(1-\lambda )T_q=T_{\lambda p+(1-\lambda )q}
\end{equation*}
for every $\lambda\in\sqbrac{0,1}$ and $p,q\in\mscript (\real )$. An observable $M\in\oscript (\kscript )$ is compatible with any $T_p\in\tscript (\kscript )$ and a joint observable $\mtilde\in\oscript _2(\kscript )$ is given by
\begin{equation*}
\mtilde (s)(A\times B)=p(A)M(s)(B)
\end{equation*}
If $M\in\oscript (\kscript )$, $T\in\tscript (\kscript )$ and $\lambda\in\sqbrac{0,1}$ we consider $\lambda M+(1-\lambda )T$ as the observable $M$ together with noise. Stated differently, we consider $\lambda M+(1-\lambda )T$ to be a noisy version of $M$. The parameter $1-\lambda$ gives a measure of the proportion of noise and is called the \textit{noise index}. Smaller $\lambda$ gives a larger proportion of noise. As we shall see, incompatible observables may have compatible noisy versions.

The next lemma follows directly from Corollary~\ref{cor14}. It shows that if $M$ is compatible with $N$, then $M$ is compatible with any noisy version of $N$.

\begin{lem}    
\label{lem21}
If $M\in\oscript (\kscript )$ is compatible with $N\in\oscript (\kscript )$, then $M$ is compatible with $\lambda N+(1-\lambda )T$ for any $\lambda\in\sqbrac{0,1}$ and $T\in\tscript (\kscript )$.
\end{lem}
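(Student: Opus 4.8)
The plan is to invoke Corollary~\ref{cor14} with the third observable taken to be the trivial observable $T$. Two ingredients are needed. First, by hypothesis $M$ is compatible with $N$. Second, as recorded at the beginning of Section~2, every observable $M\in\oscript(\kscript )$ is compatible with every trivial observable $T=T_p\in\tscript(\kscript )$: the assignment $\mtilde (s)(A\times B)=p(A)M(s)(B)$ extends, by the Hahn extension theorem exactly as in the proof of Theorem~\ref{thm11}, to an element of $\oscript _2(\kscript )$ whose marginals are $T_p$ and $M$, so $\brac{M,T}$ is a compatible set.

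Given these two facts, note that $T\in\tscript(\kscript )\subseteq\oscript(\kscript )$, so Corollary~\ref{cor14} applies with the roles of $N$ and $P$ played by $N$ and $T$ respectively. Since $M$ is compatible with both $N$ and $T$, Corollary~\ref{cor14} yields that $M$ is compatible with $\lambda N+(1-\lambda )T$ for every $\lambda\in\sqbrac{0,1}$, which is precisely the assertion of the lemma.

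There is no substantive obstacle here; the statement is an immediate specialization of Corollary~\ref{cor14}. The only point requiring (routine) verification is that a trivial observable is compatible with $M$, and this amounts to exhibiting the product joint observable displayed above and checking its marginals — which has already been done in the discussion preceding the lemma. One could alternatively phrase the argument as a direct application of Theorem~\ref{thm13} with $n=2$, $m=2$, $M_1^1=M_2^1=M$, $M_1^2=N$, $M_2^2=T$ and weights $\lambda ,1-\lambda$, but routing through Corollary~\ref{cor14} is the most economical presentation.
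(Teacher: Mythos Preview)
Your proof is correct and matches the paper's approach exactly: the paper states that the lemma ``follows directly from Corollary~\ref{cor14}'' and gives no further argument, relying (as you do) on the earlier observation that any $M$ is compatible with any trivial $T$. Your write-up simply makes explicit the instantiation of Corollary~\ref{cor14} that the paper leaves to the reader.
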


The following lemma shows that for any $M,N\in\oscript (\kscript )$ a noisy version of $N$ with noise index $\lambda$ is compatible with any noisy version of $M$ with noise index $1-\lambda$. The lemma also shows that if $M$ is compatible with a noisy version of $N$, then $M$ is compatible with a still noisier version of $N$.

\begin{lem}    
\label{lem22}
Let $M,N\in\oscript (\kscript )$ and $S,T\in\tscript (\kscript )$.
{\rm (a)}\enspace If $\lambda\in\sqbrac{0,1}$, then $\lambda M+(1-\lambda )T$ and $(1-\lambda )N+\lambda S$ are compatible.
{\rm (b)}\enspace If $M$ is compatible with $\lambda N+(1-\lambda )T$, then $M$ is compatible with $\mu N+(1-\mu )T$ where 
$0\le\mu\le\lambda\le 1$.
\end{lem}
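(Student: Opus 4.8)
The plan is to prove (a) by exhibiting an explicit joint observable and then to deduce (b) from Corollary~\ref{cor14}. For part (a), write $S=T_p$ and $T=T_q$ with $p,q\in\mscript(\real)$, and define $\mtilde$ on product sets by
\[
\mtilde(s)(A\times B)=\lambda M(s)(A)p(B)+(1-\lambda)q(A)N(s)(B),
\]
that is, $\mtilde(s)=\lambda\sqbrac{M(s)\times p}+(1-\lambda)\sqbrac{q\times N(s)}$, a convex combination of the two product probability measures $M(s)\times p$ and $q\times N(s)$ on $\bscript(\real^2)$. Then $\mtilde(s)\in\mscript(\real^2)$ already (no Hahn extension is strictly needed, although one may phrase it that way), and $\mtilde$ is $\sigma$-affine because $M$ and $N$ are and $p,q$ are constants; on product sets this is immediate, and it passes to all of $\bscript(\real^2)$ by uniqueness of the extension. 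Computing the marginals gives $\mtilde(s)(A\times\real)=\lambda M(s)(A)+(1-\lambda)q(A)=\sqbrac{\lambda M+(1-\lambda)T}(s)(A)$ and $\mtilde(s)(\real\times B)=\lambda p(B)+(1-\lambda)N(s)(B)=\sqbrac{(1-\lambda)N+\lambda S}(s)(B)$, which are precisely the two observables in the statement, so $\mtilde$ is a joint observable for them.

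For part (b), the case $\lambda=0$ is trivial, since then $\mu=0$ and both $\mu N+(1-\mu)T$ and $\lambda N+(1-\lambda)T$ equal $T$, which is compatible with $M$. When $\lambda>0$, set $\nu=\mu/\lambda\in\sqbrac{0,1}$ and use the identity
\[
\mu N+(1-\mu)T=\nu\sqbrac{\lambda N+(1-\lambda)T}+(1-\nu)T .
\]
By hypothesis $M$ is compatible with $\lambda N+(1-\lambda)T$, and $M$ is compatible with the trivial observable $T$ automatically, so Corollary~\ref{cor14} shows $M$ is compatible with the right-hand side, i.e.\ with $\mu N+(1-\mu)T$.

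I do not expect a genuine obstacle. The only points requiring a little care are confirming that the convex combination of product measures in (a) really is a Borel probability measure on $\real^2$ with the stated marginals, and checking the arithmetic identity in (b); both are routine. Conceptually the lemma says only that a noisy observable can always be made noisier by blending in a further trivial observable, and Corollary~\ref{cor14} is exactly the tool that carries compatibility through such blends.
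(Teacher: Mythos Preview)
Your proof is correct, and for part (b) it is essentially identical to the paper's argument (the paper cites Theorem~\ref{thm13} directly rather than Corollary~\ref{cor14}, with $\alpha=\mu/\lambda$ in place of your $\nu$). For part (a) the paper takes a slightly shorter route: noting that $\{M,S\}$ and $\{T,N\}$ are each compatible sets (any observable is compatible with a trivial one), it invokes Theorem~\ref{thm13} to conclude that $\lambda M+(1-\lambda)T$ is compatible with $\lambda S+(1-\lambda)N$; the joint observable produced by that theorem is precisely the $\mtilde$ you constructed by hand, so your explicit approach and the paper's abstract one amount to the same thing.
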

\begin{proof}
(a)\enspace Since $\brac{M,S}$ and $\brac{T,N}$ are compatible sets, by Theorem~\ref{thm13}
$\lambda M+(1-\lambda )T$ is compatible with $\lambda S+(1-\lambda )N$.
(b)\enspace We can assume that $\lambda >0$ and we let $\alpha =\mu /\lambda$ so $0\le\alpha\le 1$. Since
$\brac{M,\lambda N+(1-\lambda )T}$ and $\brac{M,T}$ are compatible sets, by Theorem~\ref{thm13},
$M=\alpha M+(1-\alpha )M$ is compatible with
\begin{align*}
\alpha\sqbrac{\lambda N+(1-\lambda )T}+(1-\alpha )T&=\alpha\lambda N+\sqbrac{\alpha (1-\lambda )+(1-\alpha )}T\\
   &=\mu N+(1-\mu )T\qedhere
\end{align*}
\end{proof}

The \textit{compatibility region} $J(M_1,M_2,\ldots ,M_n)$ of observables $M_i\in\oscript (\kscript )$, $i=1,\ldots ,n$, is the set of points $(\lambda _1,\lambda _2,\ldots ,\lambda _n)\in\sqbrac{0,1}^n$ for which there exist $T_i\in\tscript (\kscript )$, $i=1,2,\ldots ,n$, such that
\begin{equation*}
\brac{\lambda _iM_i+(1-\lambda _i)T_i}_{i=1}^n
\end{equation*}
form a compatible set. Thus, $J(M_1,M_2,\ldots ,M_n)$ gives parameters for which there exist compatible noisy versions of $M_1,M_2,\ldots ,M_n$. It is clear that $0=(0,\ldots ,0)\in J(M_1,M_2,\ldots ,M_n)$ and we shall show that $J(M_1,M_2,\ldots ,M_n)$ contains many points. We do not know whether $J(M_1,M_2,\ldots ,M_n)$ is\newline
symmetric under permutations of the $M_i$. For example, is $J(M_1,M_2)=J(M_2,M_1)$?

\begin{thm}    
\label{thm23}
$J(M_1,M_2,\ldots M_n)$ is a convex subset of $\sqbrac{0,1}^n$.
\end{thm}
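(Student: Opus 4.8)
The plan is to apply Theorem~\ref{thm13}. Since $J(M_1,\ldots ,M_n)\subseteq\sqbrac{0,1}^n$ holds by definition, only convexity needs proof. First I would fix points $(\lambda _1,\ldots ,\lambda _n)$ and $(\mu _1,\ldots ,\mu _n)$ in $J(M_1,\ldots ,M_n)$ and a scalar $t\in\sqbrac{0,1}$; the cases $t=0$ and $t=1$ are immediate, so assume $0<t<1$. By the definition of the compatibility region there exist trivial observables $T_i,S_i\in\tscript (\kscript )$ such that $\brac{\lambda _iM_i+(1-\lambda _i)T_i}_{i=1}^n$ and $\brac{\mu _iM_i+(1-\mu _i)S_i}_{i=1}^n$ are each compatible. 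Theorem~\ref{thm13}, applied to these two compatible families with weights $t$ and $1-t$, then gives that $\brac{t\sqbrac{\lambda _iM_i+(1-\lambda _i)T_i}+(1-t)\sqbrac{\mu _iM_i+(1-\mu _i)S_i}}_{i=1}^n$ is compatible.

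Next I would identify each member of this new family as a noisy version of the corresponding $M_i$. Put $\nu _i=t\lambda _i+(1-t)\mu _i$ and note the identity $t(1-\lambda _i)+(1-t)(1-\mu _i)=1-\nu _i$; then the $i$-th member equals $\nu _iM_i+t(1-\lambda _i)T_i+(1-t)(1-\mu _i)S_i$. If $\nu _i<1$, the noise part is $(1-\nu _i)T_i'$ where $T_i'=\sqbrac{t(1-\lambda _i)/(1-\nu _i)}T_i+\sqbrac{(1-t)(1-\mu _i)/(1-\nu _i)}S_i$ is a convex combination of trivial observables, hence lies in $\tscript (\kscript )$. So the $i$-th member is $\nu _iM_i+(1-\nu _i)T_i'$, and $\brac{\nu _iM_i+(1-\nu _i)T_i'}_{i=1}^n$ being compatible means precisely that $(\nu _1,\ldots ,\nu _n)=t(\lambda _1,\ldots ,\lambda _n)+(1-t)(\mu _1,\ldots ,\mu _n)$ belongs to $J(M_1,\ldots ,M_n)$, which is the convexity we want.

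The only point needing care is the degenerate case $\nu _i=1$, where the normalization defining $T_i'$ would divide by zero; but since $0<t<1$, $\nu _i=1$ forces $\lambda _i=\mu _i=1$, so the $i$-th member is simply $M_i=\nu _iM_i+(1-\nu _i)T_i'$ for an arbitrary $T_i'\in\tscript (\kscript )$ and no normalization is needed. Apart from this bookkeeping I do not expect any obstacle: the argument uses only Theorem~\ref{thm13}, the convexity of $\tscript (\kscript )$, and the elementary identity $t(1-\lambda _i)+(1-t)(1-\mu _i)=1-\nu _i$.
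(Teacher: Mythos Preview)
Your proposal is correct and follows essentially the same route as the paper: apply Theorem~\ref{thm13} to the two compatible families with weights $t$ and $1-t$, then use the identity $t(1-\lambda_i)+(1-t)(1-\mu_i)=1-\nu_i$ to recognize the resulting observables as noisy versions $\nu_iM_i+(1-\nu_i)T_i'$ with $T_i'\in\tscript(\kscript)$. Your treatment is in fact slightly more careful, since you explicitly handle the degenerate case $\nu_i=1$ where the normalization $1/(1-\nu_i)$ is undefined, a point the paper's proof passes over silently.
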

\begin{proof}
Suppose $(\lambda _1,\ldots ,\lambda _n),(\mu _1,\ldots ,\mu _n)\in J(M_1,\ldots ,M_n)$. We must show that
\begin{align*}
\lambda (\lambda _1,\ldots ,\lambda _n)+&(1-\lambda )(\mu _1,\ldots ,\mu _n)\\
&=(\lambda\lambda _1+(1-\lambda )\mu _1,\ldots ,\lambda\lambda _n+(1-\lambda )\mu _n)\in J(M_1,\ldots ,M_n)
\end{align*}
for all $\lambda\in\sqbrac{0,1}$. Now there exist $S_1,\ldots ,S_n,T_1,\ldots ,T_n\in\tscript (\kscript )$ such that
$\brac{\lambda _iM_i+(1-\lambda _i)S_i}_{i=1}^n$ and $\brac{\mu _iM_i+(1-\mu _i)T_i}_{i=1}^n$ are compatible. By
Theorem~\ref{thm13} the set of observables
\begin{align*}
&\brac{\lambda\sqbrac{\lambda _iM_i+(1-\lambda _i)S_i}+(1-\lambda )\sqbrac{\mu _iM_i+(1-\mu _i)T_i}}\\
&\quad =\brac{(\lambda\lambda _i+(1-\lambda )\mu _i)M_i+\lambda (1-\lambda _i)S_i+(1-\lambda )(1-\mu _i)T_i}
\end{align*}
is compatible. Since
\begin{align*}
\lambda (1-\lambda _i)+(1-\lambda )(1-\mu _i)&=1-\lambda\lambda _i-\mu _i+\lambda\mu _i\\
  &=1-\sqbrac{\lambda\lambda _i+(1-\lambda )\mu _i}
\end{align*}
letting $\alpha _i=\lambda\lambda _i+(1-\lambda )\mu _i$ we have that
\begin{equation*}
U_i=\frac{1}{1-\alpha _i}\sqbrac{\lambda (1-\lambda _i)S_i+(1-\lambda )(1-\mu _i)T_i}\in\tscript (\kscript )
\end{equation*}
Since $\brac{\alpha _iM_i+(1-\alpha _i)U_i}_{i=1}^n$ forms a compatible set, we conclude that\newline
$(\alpha _1,\ldots ,\alpha _n)\in J(M_1,\ldots ,M_n)$.
\end{proof}

Let $\Delta _n=\brac{(\lambda _1,\ldots ,\lambda _n)\in\sqbrac{0,1}^n\colon\sum\lambda _i\le 1}$. To show that
$\Delta _n$ forms a convex subset of $\sqbrac{0,1}^n\subseteq\real ^n$, let
$(\lambda _1,\ldots ,\lambda _n),(\mu _1,\ldots ,\mu _n)\in\Delta _n$ and $\lambda\in\sqbrac{0,1}$. Then
$\lambda (\lambda _1,\ldots ,\lambda _n+(1-\lambda )(\mu _1,\ldots ,\mu _n)\in\sqbrac{0,1}^n$ and
\begin{equation*}
\sum _{i=1}^n\sqbrac{\lambda\lambda _i+(1-\lambda )\mu _i}
  =\lambda\sum\lambda _i+(1-\lambda )\sum\mu _i\le\lambda +(1-\lambda )=1
\end{equation*}

\begin{thm}    
\label{thm24}
If $\brac{M_1,\ldots ,M_n}\subseteq\oscript (\kscript )$, then $\Delta _n\subseteq J(M_1,\ldots ,M_n)$.
\end{thm}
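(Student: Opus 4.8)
The plan is to produce, for an arbitrary point of $\Delta _n$, an explicit joint observable assembled from product measures. Fix once and for all a probability measure $p\in\mscript (\real )$ and let $T=T_p\in\tscript (\kscript )$ be the corresponding trivial observable. Given $(\lambda _1,\ldots ,\lambda _n)\in\Delta _n$, set $\lambda _0=1-\sum _{i=1}^n\lambda _i$, so that $\lambda _0\in\sqbrac{0,1}$ and $\sum _{i=0}^n\lambda _i=1$. For $s\in\kscript$ and $A_i\in\bscript (\real )$ define $\mtilde (s)$ on the rectangle $A_1\times\cdots\times A_n$ by
\begin{equation*}
\mtilde (s)(A_1\times\cdots\times A_n)=\sum _{i=1}^n\lambda _iM_i(s)(A_i)\prod _{j\ne i}p(A_j)+\lambda _0\prod _{j=1}^np(A_j).
\end{equation*}
This is the finite convex combination $\sum _{i=0}^n\lambda _i\nu _i$, where $\nu _0=p\times\cdots\times p$ and, for $1\le i\le n$, $\nu _i\in\mscript (\real ^n)$ is the product measure on $\real ^n$ having $M_i(s)$ as its $i$-th factor and $p$ as all other factors. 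Hence $\mtilde (s)$ extends to a genuine element of $\mscript (\real ^n)$ (one may also invoke the Hahn extension theorem on rectangles exactly as in the proof of Theorem~\ref{thm11}).

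Next I would verify that $\mtilde\in\oscript _n(\kscript )$, i.e.\ that $\mtilde$ is $\sigma$-affine. If $s=\sum _k\mu _ks_k$ is a $\sigma$-convex combination in $\kscript$, then $M_i(s)(A_i)=\sum _k\mu _kM_i(s_k)(A_i)$ for every $i$ and every $A_i\in\bscript (\real )$, since each $M_i$ is $\sigma$-affine. Substituting this into the displayed formula and using $\sum _k\mu _k=1$ to handle the $\lambda _0$ term shows that $\mtilde (s)$ and $\sum _k\mu _k\mtilde (s_k)$ agree on every rectangle, hence on all of $\bscript (\real ^n)$. Thus $\mtilde$ is an $n$-dimensional observable.

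Finally I would read off the marginals. Fix $i$ and $A\in\bscript (\real )$, and in the displayed formula take $A_i=A$ and $A_j=\real$ (so $p(A_j)=1$) for $j\ne i$. The $i$-th summand contributes $\lambda _iM_i(s)(A)$, each summand with index $k\ne i$ contributes $\lambda _kp(A)$, and the last term contributes $\lambda _0p(A)$, so
\begin{align*}
\mtilde (s)(\real\times\cdots\times A\times\cdots\times\real )
 &=\lambda _iM_i(s)(A)+\paren{\sum _{k\ne i}\lambda _k+\lambda _0}p(A)\\
 &=\lambda _iM_i(s)(A)+(1-\lambda _i)T(s)(A).
\end{align*}
Therefore the $i$-th marginal of $\mtilde$ is $\lambda _iM_i+(1-\lambda _i)T$, so with $T_i=T$ for all $i$ the set $\brac{\lambda _iM_i+(1-\lambda _i)T}_{i=1}^n$ is compatible, which means $(\lambda _1,\ldots ,\lambda _n)\in J(M_1,\ldots ,M_n)$. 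As the point of $\Delta _n$ was arbitrary, $\Delta _n\subseteq J(M_1,\ldots ,M_n)$.

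I do not expect any serious obstacle. The only points needing a little care are that the rectangle formula really does define a Borel probability measure on $\real ^n$ and that $\sigma$-affinity survives the substitution; both follow immediately once $\mtilde (s)$ is recognized as a finite convex combination of product probability measures, so the argument is essentially just the bookkeeping above.
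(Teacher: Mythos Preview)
Your argument is correct. The explicit convex combination of product measures really is a probability measure on $\real ^n$, its $\sigma$-affinity in $s$ follows exactly as you say (only one factor in each $\nu _i$ depends on $s$), and the marginal computation is clean.

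The route, however, differs from the paper's. The paper does not build a joint observable for a general point of $\Delta _n$ at all: it observes that the $n+1$ vertices $\delta _0=(0,\ldots ,0)$ and $\delta _i=(0,\ldots ,1,\ldots ,0)$ obviously lie in $J(M_1,\ldots ,M_n)$, notes that $\Delta _n$ is their convex hull, and then invokes Theorem~\ref{thm23} (convexity of $J(M_1,\ldots ,M_n)$) to conclude. Your construction is, in effect, the unrolled version of that argument --- your $\nu _i$ is precisely a joint observable witnessing $\delta _i\in J(M_1,\ldots ,M_n)$, and $\sum _i\lambda _i\nu _i+\lambda _0\nu _0$ is what Theorem~\ref{thm13} (underlying Theorem~\ref{thm23}) would produce when forming the convex combination. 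What you gain is self-containment: your proof does not depend on Theorem~\ref{thm23} and hands the reader an explicit joint observable with the same trivial $T$ in every slot. What the paper's approach gains is brevity and a clear display of the structural reason the inclusion holds, namely the convexity of the compatibility region.
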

\begin{proof}
Let $\delta _0=(0,0,\ldots ,0)\in\real ^n$, $\delta _i=(0,\ldots ,0,1,0,\ldots ,0)\in\real ^n$, $i=1,\ldots ,n$ where 1 is in the $i$th coordinate. It is clear that
\begin{equation*}
\delta _i\in J(M_1,\ldots ,M_n)\cap\Delta _n,\quad i=0,1,\ldots ,n
\end{equation*}
If $\lambda =(\lambda _1,\ldots ,\lambda _n)\in\Delta _n$, letting $\mu =\sum\lambda _i$ we have that $0\le\mu\le 1$,
$\sum\lambda _i+(1-\mu )=1$ and
\begin{equation*}
\lambda =\sum _{i=1}^n\lambda _i\delta _i+(1-\mu )\delta _0
\end{equation*}
It follows that $\Delta _n$ is the convex hull of $\brac{\delta _0,\delta _1,\ldots ,\delta _n}$. Since
\begin{equation*}
\brac{\delta _0,\delta _1,\ldots ,\delta _n}\subseteq J(M_1,\ldots ,M_n)
\end{equation*}
and $J(M_1,\ldots ,M_n)$ is convex, it follows that $\Delta _n\in J(M_1,\ldots ,M_n)$.
\end{proof}

The $n$-\textit{dimensional compatibility region} for PT $\kscript$ is defined by
\begin{equation*}
J_n(\kscript )=\cap\brac{J(M_1,\ldots ,M_n)\colon M_i\in\oscript (\kscript ),i=1,\ldots ,n}
\end{equation*}
We have that $\Delta _n\subseteq J_n(\kscript )\subseteq\sqbrac{0,1}^n$ and $J_n(\kscript )$ is a convex set that gives a measure of the incompatibility of observables on $\kscript$. As $J_n(\kscript )$ gets smaller, $\kscript$ gets more incompatible and the maximal incompatibility is when $J_n(\kscript )=\Delta _n$. For the case of quantum states $\kscript$, the set $J_2(\kscript )$ has been considered in detail in \cite{bhs12}.

We now introduce a measure of compatibility that we believe is simpler and easier to investigate than $J_2(M,N)$ For 
$M,N\in\oscript (\kscript )$, the \textit{compatibility interval} $I(M,N)$ is the set of $\lambda\in\sqbrac{0,1}$ for which there exists a $T\in\tscript (\kscript )$ such that $M$ is compatible with $\lambda N+(1-\lambda )T$. Of course, $0\in T(M,N)$ and $M$ and $N$ are compatible if and only if $1\in I(M,N)$. We do not know whether $I(M,N)=I(N,M)$. It follows from Lemma~\ref{lem22}(b) that if $\lambda\in T(M,N)$ and $0\le\mu\le\lambda$, then $\mu\in I(M,N)$. Thus, $I(M,N)$ is an interval with left endpoint 0. The \textit{index of compatibility} of $M$ and $N$ is $\lambda (M,N)=\sup\brac{\lambda\colon\lambda\in I(M,N)}$. We do not know whether $\lambda (M,N)\in I(M,N)$ but in any case $I(M,N)=\sqbrac{0,\lambda (M,N)}$ or $I(M,N)=\sqparen{0,\lambda (M,N)}$. For a PT $\kscript$, we define the \textit{interval of compatibility} for $\kscript$ to be
\begin{equation*}
I(\kscript )=\cap\brac{I(M,N)\colon M,N\in\oscript (\kscript )}
\end{equation*}
The \textit{index of compatibility} of $\kscript$ is
\begin{equation*}
\lambda (\kscript )=\inf\brac{\lambda (M,N)\colon M,N\in\oscript (\kscript )}
\end{equation*}
and $I(\kscript )=\sqbrac{0,\lambda (\kscript )}$ or $I(\kscript )=\sqparen{0,\lambda (\kscript )}$. Again, $\lambda (\kscript )=0$ gives a measure of incompatibility of the observables in $\oscript (\kscript )$.
\bigskip

\noindent\textbf{Example 1.}\ (Classical Probability Theory)\enspace Let $(\Omega ,\ascript )$ be a measurable space and let $\vscript$ be the Banach space of real-valued measures on $\ascript$ with the total variation norm. If $\kscript$ is the
$\sigma$-convex set of probability measures on $\ascript$, then $\kscript$ generates $\vscript$. There are two types of observables on $\kscript$, the \textit{sharp} and \textit{fuzzy} observables. The sharp observables have the form $M_f$ where $f$ is a measurable function $f\colon\Omega\to\real$ and $M_f(s)(A)=s\sqbrac{f^{-1}(A)}$. If $M_f$, $M_g$ are sharp observables, form the unique 2-dimensional observable $\mtilde$ satisfying
\begin{equation*}
\mtilde (s)(A\times B)=s\sqbrac{f^{-1}(A)\cap g^{-1}(B)}
\end{equation*}
Then $\mtilde$ is a joint observable for $M_f$, $M_g$ so $M_f$ and $M_g$ are compatible. The unsharp observables are obtained as follows. Let $\fscript (\Omega )$ be the set of measurable functions $f\colon\Omega\to\sqbrac{0,1}$. Let
$\mhat\colon\bscript (\real )\to\fscript (\Omega )$ satisfy $\mhat (\real )=1$, $\mhat (\cupdot A_i)=\sum\mhat (A_i)$. An unsharp observable has the form
\begin{equation*}
M(s)(A)=\int\mhat (A)ds
\end{equation*}
Two unsharp observables $M,N$ are also compatible because we can form the joint observable $\mtilde$ given by
\begin{equation*}
\mhat (S)(A\times B)=\int\mhat (A)\nhat (B)ds
\end{equation*}
We conclude that $J(\kscript )=\sqbrac{0,1}\times\sqbrac{0,1}$ and $I(\kscript )=\sqbrac{0,1}$ so $\kscript$ has the maximal amount of compatibility.
\bigskip

\noindent\textbf{Example 2.}\ (Quantum Theory)\enspace Let $H$ be a separable complex Hilbert space and let $\kscript$ be the $\sigma$-convex set of all trace 1 positive operators on $H$. Then $\kscript$ generates the Banach space of self-adjoint trace-class operators with the trace norm. It is well known that $M\in\oscript (\kscript )$ if and only if there exists a positive operator-valued measure (POVM) $P$ such that $M(s)(A)=\rmtr\sqbrac{sP(A)}$ for every $s\in\kscript$, $A\in\bscript (\real )$. It is shown in \cite{bhs12} that if $\dim H=\infty$, then there exist $M_1,M_2\in\oscript (\kscript )$ such that
$J_2(M_1,M_2)=\Delta _2$ and hence $J(\kscript )=\Delta _2$. If $\dim H<\infty$, then $J(\kscript )$ is not known, although partial results have been obtained and it is known that $J(\kscript )\to\Delta _2$ as $\dim H\to\infty$
\bigskip

Now let $H$ be an arbitrary complex Hilbert space with $\dim H\ge 2$. Although the Pauli matrices $\sigma _x$, $\sigma _y$ are 2-dimensional, we can extend them from a 2-dimensional subspace $H_0$ of $H$ to all of $H$ by defining
$\sigma _x\psi =0$ for all $\psi\in H_0^\perp$. Define the POVMs $M_x$, $M_y$ on $H$ by
$M_x(\pm 1)=\tfrac{1}{2}(I\pm\sigma _x)$, $M_y(\pm 1)=\tfrac{1}{2}(I\pm\sigma _y)$. It is shown in \cite{bhs12} that
\begin{equation*}
J(M_x,M_y)=\brac{(\lambda ,\mu )\in\sqbrac{0,1}\times\sqbrac{0,1}\colon\lambda ^2+\mu ^2\le 1}
\end{equation*}
Thus, $J(M_x,M_y)$ is a quadrant of the unit disk. We conclude that $M_x$ is compatible with $\mu M_y+(1-\mu )T$ for $T\in\tscript (\kscript )$ if and only if $1+\mu ^2\le 1$. Therefore, $\mu =0$, so $I(M_x,M_y)=\brac{0}$ and $\lambda (M_x,M_y)=0$. Thus, $I(\kscript )=\brac{0}$ and $\lambda (\kscript )=0$. We conclude that quantum mechanics has the smallest index of compatibility possible for a PT. The index of compatibility for a classical system is 1, so we have the two extremes. It would be interesting to find $\lambda (\kscript )$ for other PTs.

\section{Concrete Quantum Logics} 
We now consider a PT that seems to be between the classical and quantum PTs of Examples~1 and~2. A collection of subsets $\ascript$ of a set $\Omega$ is a $\sigma$-\textit{class} if $\emptyset\in\ascript$, $A^c\in\ascript$ whenever $A\in\ascript$ and if $A_i$ are mutually disjoint, $i=1,2,\ldots$, then $\cup A_i\in\ascript$. If $\ascript$ is a $\sigma$-class on $\Omega$, we call
$(\Omega ,\ascript )$ a \textit{concrete quantum logic}. A $\sigma$-\textit{state} on $\ascript$ is a map
$s\colon\ascript\to\sqbrac{0,1}$ such that $s(\Omega )=1$ and if $A_i\in\ascript$ are mutually disjoint, then
$s(\cup A_i)=\sum s(A_i)$. If $\kscript$ is the set of $\sigma$-states on $(\Omega ,\ascript )$, we call $\kscript$ a 
\textit{concrete quantum logic} PT. Let $\ascript _\sigma$ be the $\sigma$-algebra generated by $\ascript$. A $\sigma$-state
$s$ is \textit{classical} if there exists a probability measure $\mu$ on $\ascript _\sigma$ such that $s=\mu\mid\ascript$. As in the classical case, an observable is \textit{sharp} if it has the form $M_f(s)(A)=s\sqbrac{f^{-1}(A)}$ for an $\ascript$-measurable function $f\colon\Omega\to\real$. If $f$ and $g$ are $\ascript$-measurable functions satisfying
$f^{-1}(A)\cap g^{-1}(B)\in\ascript$ for all $A,B\in\bscript (\real )$, then $M_f$ and $M_g$ are compatible because they have a joint observable $M$ satisfying $M(s)(A\times B)=s\sqbrac{f^{-1}(A)\cap g^{-1}(B)}$ for all $s\in\kscript$, $A,B\in\bscript (\real )$. We do not know whether $M_f$ and $M_g$ compatible implies that $f^{-1}(A)\cap g^{-1}(B)\in\ascript$ holds for every
$A,B\in\bscript (\real )$, although we suspect it does not.
\bigskip

\noindent\textbf{Example 3.}This is a simple example of a concrete quantum logic. Let $\Omega =\brac{1,2,3,4}$ and let
$\ascript$ be the collection of subsets of $\Omega$ with even cardinality. Then
\begin{equation*}
\ascript =\brac{\emptyset ,\Omega ,\brac{1,2},\brac{3,4},\brac{1,3},\brac{2,4},\brac{1,4},\brac{2,3}}
\end{equation*}
Let $\kscript$ be the sets of all states on $\ascript$. Letting $a=\brac{1,2}$, $a'=\brac{3,4}$, $b=\brac{1,3}$, $b'=\brac{3,4}$,
$c=\brac{1,4}$, $c'=\brac{2,3}$ we can represent an $s\in\kscript$ by
\begin{align*}
\shat&=\paren{s(a),s(a'),s(b),s(b'),s(c),s(c')}\\
  &=\paren{s(a),1-s(a),s(b),1-s(b),s(c),1-s(c)}
\end{align*}
Thus, every $s\in\kscript$ has the form
\begin{equation*}
s=\paren{\lambda _1,1-\lambda _1,\lambda _2,1-\lambda _2,\lambda _3,1-\lambda _3}
\end{equation*}
for $0\le\lambda _i\le 1$, $i=1,2,3$. The pure (extremal) classical states are the 0-1 states:
$\delta _1=(1,0,1,0,1,0)$, $\delta _3=(1,0,0,1,0,1)$, $\delta _3=(0,1,1,0,0,1)$, $\delta _4=(0,1,0,1,1,0)$. The pure nonclassical states are the 0-1 states: $\gamma _1=1-\delta _1$, $\gamma _2=1-\delta _2$, $\gamma _3=1-\delta _3$,
$\gamma _4=1-\delta _4$ where $1=(1,1,1,1,1,1)$. For example, to see that $\gamma _1$ is not classical, we have that
$\gamma _1=(0,1,0,1,0,1)$. Hence,
$\gamma _1\paren{\brac{3,4}}=\gamma _1\paren{\brac{2,4}}=\gamma _1\paren{\brac{2,3}}=1$. If there exists a probability measure $\mu$ such that $\gamma _1=\mu\mid\ascript$ we would have
$\mu\paren{\brac{1}}=\mu\paren{\brac{2}}=\mu\paren{\brac{3}}=\mu\paren{\brac{4}}=0$ which is a contradiction. The collection of sharp observable is very limited because a measurable function $f\colon\Omega\to\real$ can have at most two values. Thus, if $M_f$ is a sharp observable there exists $a,b\in\real$ such that $M_f(s)\paren{\brac{a,b}}=1$ for every $s\in\kscript$. There are many observables with more than two values (non-binary observables) and these are not sharp. Even for this simple example, it appears to be challenging to investigate the region and interval of compatibility.

\section{Vector-Valued Measures} 
Let $\kscript$ be a PT with generated Banach space $\vscript$ and $\vscript ^*$ be the Banach space dual of $\vscript$.
A \textit{normalized vector-valued measure} (NVM) for $\kscript$ is a map $\Gamma\colon\bscript (\real )\to\vscript ^*$ such that $A\mapsto\Gamma(A)(s)\in\mscript (\real )$ for every $s\in\kscript$. Thus, $\Gamma$ satisfies the conditions:
\begin{list} {(\arabic{cond})}{\usecounter{cond}%
\setlength\itemindent{-7pt}}
\item $\Gamma (\real )(s)=1$ for every $s\in\kscript$,
\item $0\le\Gamma (A)(s)\le 1$ for every $s\in\kscript$, $A\in\bscript (\real )$,
\item If $A_i\in\bscript (\real )$ are mutually disjoint, $i=1,2,\ldots$, then
\begin{equation*}
\Gamma (\cup A_i)(s)=\sum\Gamma (A_i)(s)
\end{equation*}
for every $s\in\kscript$.
\end{list}
This section shows that there is a close connection between observables on $\kscript$ and NVMs for $\kscript$.

\begin{thm}    
\label{thm41}
If $\Gamma$ is a NVM for $\kscript$, then $M\colon\kscript\to\mscript (\real )$ given by\newline
$M(s)(A)=\Gamma (A)(s)$, $s\in\kscript$, $A\in\bscript (\real )$, is an observable on $\kscript$.
\end{thm}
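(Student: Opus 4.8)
The plan is to verify directly the two defining properties of an observable: that $M(s)$ is a probability measure on $\bscript(\real)$ for every state $s$, and that $M$ is $\sigma$-affine.

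First I would check that for a fixed $s\in\kscript$ the set function $A\mapsto M(s)(A)=\Gamma(A)(s)$ is a probability measure. Condition~(1) gives $M(s)(\real)=1$, and condition~(2) gives $0\le M(s)(A)\le 1$ for every $A\in\bscript(\real)$, so $M(s)$ is bounded and nonnegative. Applying condition~(3) to the disjoint decomposition $\real=\real\cupdot\emptyset\cupdot\emptyset\cupdot\cdots$ yields $1=1+\sum M(s)(\emptyset)$, hence $M(s)(\emptyset)=0$; and condition~(3) applied to an arbitrary sequence of mutually disjoint Borel sets gives countable additivity. Thus $M(s)$ is a probability measure, that is, $M(s)\in\mscript(\real)$.

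Next I would establish $\sigma$-affinity. Fix $\lambda_i\in\sqbrac{0,1}$ with $\sum\lambda_i=1$ and $s_i\in\kscript$; by $\sigma$-convexity of $\kscript$ the series $s=\sum\lambda_i s_i$ converges in norm to an element of $\kscript$. For each fixed $A\in\bscript(\real)$ the map $\Gamma(A)$ lies in $\vscript^*$, hence is linear and norm-continuous, so
\begin{equation*}
M(s)(A)=\Gamma(A)(s)=\Gamma(A)\Bigl(\sum\lambda_i s_i\Bigr)=\sum\lambda_i\Gamma(A)(s_i)=\sum\lambda_i M(s_i)(A).
\end{equation*}
Since this holds for every Borel set $A$, and since a countable convex combination of probability measures taken setwise is again a probability measure, this is precisely the assertion that $M\bigl(\sum\lambda_i s_i\bigr)=\sum\lambda_i M(s_i)$ in $\mscript(\real)$. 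Therefore $M\in\oscript(\kscript)$.

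I do not expect a genuine obstacle here: the substance of the statement is already encoded in the three defining conditions of an NVM together with the fact that elements of $\vscript^*$ are continuous linear functionals. The only point that needs a moment's care is the interchange of $\Gamma(A)$ with the infinite sum in the $\sigma$-affinity step, and that is justified exactly by the norm convergence supplied by $\sigma$-convexity of $\kscript$ and the continuity of $\Gamma(A)$.
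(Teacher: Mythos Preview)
Your proof is correct and follows essentially the same route as the paper's. The only cosmetic difference is that the paper's definition of an NVM already includes the clause that $A\mapsto\Gamma(A)(s)\in\mscript(\real)$ for every $s\in\kscript$, so the paper takes your first paragraph as immediate; your $\sigma$-affinity argument via linearity and norm-continuity of $\Gamma(A)\in\vscript^*$ matches the paper's exactly.
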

\begin{proof}
Since $A\mapsto\Gamma (A)(s)\in\mscript (\real )$ we have that $A\mapsto M(s)(A)\in\mscript (\real )$. Let
$\lambda _i\in\sqbrac{0,1}$ with $\sum\lambda _i=1$, $s_i\in\kscript$, $i=1,2,\ldots$, and suppose that $s=\sum\lambda _is_i$. Then $\lim\limits _{n\to\infty}\sum\limits _{i=1}^n\lambda _is_i=s$ in norm and since $s\mapsto\Gamma (A)(s)\in\vscript ^*$, for every $A\in\bscript (\real )$ we have
\begin{align*}
M(s)(A)&=M\paren{\sum\lambda _is_i}(A)=\Gamma (A)\paren{\sum\lambda _is_i}
  =\Gamma (A)\paren{\lim _{n\to\infty}\sum _{i=1}^n\lambda _is_i}\\
  &=\lim _{n\to\infty}\Gamma (A)\paren{\sum _{i=1}^n\lambda _is_i}=\lim _{n\to\infty}\sum _{i=1}^n\lambda _i\Gamma (A)(s_i)\\
  &=\lim _{n\to\infty}\sum _{i=1}^n\lambda _iM(s_i)(A)=\sum _{i=1}^\infty\lambda _iM(s_i)(A)
\end{align*}
It follows that $M\paren{\sum\lambda _is_i}=\sum\lambda _iM(s_i)$ so $M\in\oscript (\kscript )$.
\end{proof}

The converse of Theorem~\ref{thm41} holds if some mild conditions are satisfied. To avoid some topological and measure-theoretic technicalities, we consider the special case where $\vscript$ is finite-dimensional. Assuming that $\kscript$ is the base of a generating positive cone $\vscript ^+$, we have that every element $v\in\vscript ^+$ has a unique form $v=\alpha s$,
$\alpha \ge 0$, $s\in\kscript$ and that $\vscript =\vscript ^+\oplus\vscript ^-$ where $\vscript ^-=-\vscript ^+$ and
$\vscript ^+\cap\vscript ^-=\brac{0}$. If $M\in\oscript (\kscript )$, then for every $A\in\bscript (\real )$, $s\mapsto M(s)(A)$ is a convex, real-valued function on $\kscript$. A standard argument shows that this function has a unique linear extension
$\mhat (A)=\vscript ^*$ for every $A\in\bscript (\real )$. Hence
\begin{equation}         
\label{eq41}
\mhat (A)(s)=M(s)(A)
\end{equation}
for every $s\in\kscript$, $A\in\bscript (\real )$. Since $A\mapsto\mhat (A)(s)=M(s)(A)\in\mscript (\real )$ we conclude that
$A\mapsto\mhat (A)$ is a NVM and $\mhat$ is the unique NVM satisfying \eqref{eq41}. It follows that the converse of
Theorem~\ref{thm41} holds in this case.
\bigskip

\noindent\textbf{Example $1'$.}\ (Classical Probability Theory)\enspace In this example $\vscript ^*$ is the Banach space of bounded measurable functions $f\colon\Omega\to\real$ with norm $\doubleab{f}=\sup\ab{f(\omega )}<\infty$ and duality given by
\begin{equation*}
\elbows{\mu ,f}=f(\mu )=\int fd\mu
\end{equation*}
The function $1(\omega )=1$ for every $\omega\in\Omega$ is the natural unit satisfying $1(\mu )=1$ for every $\mu\in\kscript$. In this case, $\kscript$ is a base for the generating positive cone $\vscript ^+$ of bounded measures and the converse of Theorem~\ref{thm41} holds. Then a NVM $\Gamma$ has the form $0\le\Gamma (A)(\omega )\le 1$ for every
$A\in\bscript (\real )$, $\omega\in\Omega$ and $\Gamma (\real )=1$. Thus $\Gamma (A)\in\fscript (\Omega )$ and if $M$ is the corresponding observable, then
\begin{equation*}
M(\mu )(A)=\Gamma (A)(\mu )=\int\Gamma (A)d\mu
\end{equation*}
In particular, if $T_p\in\tscript (\kscript )$ then the corresponding NVM $\Gamma _p$ has the form
\begin{equation*}
\Gamma _p(A)(\mu )=T_p(\mu )(A)=p(A)
\end{equation*}
so $\Gamma _p(A)$ is the constant function $p(A)$. Moreover, if $M_p\in\oscript (\kscript )$ is sharp, then the corresponding NVM $\Gamma _f$ satisfies
\begin{equation*}
\int\Gamma _f(A)d\mu=\Gamma _f(A)(\mu )=M_f(\mu )(A)=\mu\sqbrac{f^{-1}(A)}=\int\chi _{f^{-1}(A)}d\mu
\end{equation*}
Hence, $\Gamma _f(A)=\chi _{f^{-1}(A)}$ for every $A\in\bscript (\real )$.
\bigskip

\noindent\textbf{Example $2'$.}\ (Quantum Theory)\enspace In this example $\vscript ^*$ is the Banach space $\bscript (H)$ of bounded linear operators on $H$ with norm
\begin{equation*}
\doubleab{L}=\sup\brac{\doubleab{L\psi}\colon\doubleab{\psi}=1}
\end{equation*}
and duality given by
\begin{equation*}
\elbows{s,L}=L(a)=\rmtr(sL)
\end{equation*}
The identity operator $I$ is the natural unit satisfying $I(s)=1$ for all $s\in\kscript$. In this case, $\kscript$ is a base for the generating cone $\vscript ^+$ of positive trace class operators and the converse of Theorem~\ref{thm41} holds, If $\Gamma$ is a NVM, then $\Gamma (A)$ is a positive operator satisfying $0\le\Gamma (A)\le I$ called an \textit{effect} and
$\Gamma (\real )=I$. According to the converse of Theorem~\ref{thm41}, if $M$ is an observable, then there exists a POVM
$\Gamma$ such that
\begin{equation*}
M(s)(A)=\rmtr\sqbrac{s\Gamma (A)}
\end{equation*}
for every $s\in\kscript$ and $A\in\bscript (\real )$. In particular, if $T_p\in\tscript (\kscript )$, then the corresponding NVM
$\Gamma _p$ has the form
\begin{equation*}
\rmtr\sqbrac{s\Gamma _p(A)}=\Gamma _p(A)(s)=T_p(s)(A)=p(A)=\rmtr\sqbrac{sp(A)I}
\end{equation*}
so $\Gamma _p(A)=p(A)I$ for all $A\in\bscript (\real )$.
\bigskip

Similar to a NVM, we define an $n$-\textit{dimensional} NVM to be a map\newline
$\Gamma\colon\bscript (\real ^n)\to\vscript ^*$ such that $A\mapsto\Gamma (A)(s)\in\mscript (\real ^b)$ for every $s\in\kscript$. Moreover, a set $\brac{\Gamma _1,\ldots ,\Gamma _n}$ of NVMs for $\kscript$ is compatible if there exists an $n$-dimensional NVM $\Gamma$ such that
\begin{align*}
\Gamma (A&\times\real\times\cdots\times\real )=\Gamma _1(A)\\
&\vdots\\
\Gamma (\real&\times\real\times\cdots\times\real\times A)=\Gamma _n(A)
\end{align*}
for every $A\in\bscript (\real )$. The proof of the following theorem is straightforward.
\begin{thm}    
\label{thm42}
If $\brac{M_1,\ldots ,M_n}\subseteq\oscript (\kscript )$ and $\brac{\Gamma _1,\ldots\Gamma _n}$ are the corresponding NVM for $\kscript$, then $\brac{M_1,\ldots ,M_n}$ are compatible if and only if $\brac{\Gamma _1,\ldots ,\Gamma _n}$ are compatible.
\end{thm}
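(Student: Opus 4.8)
The plan is to leverage the bijective correspondence between observables and NVMs — Theorem~\ref{thm41} together with the converse established in the finite-dimensional setting — applied not only in dimension $1$ but in dimension $n$. First I would note that the argument proving Theorem~\ref{thm41} goes through verbatim with $\bscript (\real )$, $\mscript (\real )$ replaced throughout by $\bscript (\real ^n)$, $\mscript (\real ^n)$: an $n$-dimensional NVM $\Gamma$ yields an $n$-dimensional observable $M(s)(B)=\Gamma (B)(s)$, $B\in\bscript (\real ^n)$, since $\sigma$-additivity of $B\mapsto\Gamma (B)(s)$ is built into the definition and $\sigma$-affinity in $s$ follows from $\Gamma (B)\in\vscript ^*$ and norm convergence of $\sigma$-convex combinations. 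Likewise the reduction argument following Theorem~\ref{thm41} (linear extension of the convex function $s\mapsto M(s)(B)$ to an element of $\vscript ^*$) shows that every $M\in\oscript _n(\kscript )$ arises from a unique $n$-dimensional NVM $\mhat$ with $\mhat (B)(s)=M(s)(B)$. I will also use the elementary uniqueness fact that two elements of $\vscript ^*$ agreeing on all of $\kscript$ agree on the closed linear hull of $\kscript$, hence on $\vscript$; this is what converts equalities ``for every $s$'' into equalities of functionals.

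For the forward direction, suppose $\brac{M_1,\ldots ,M_n}$ is compatible with joint observable $M\in\oscript _n(\kscript )$, and let $\Gamma$ be the $n$-dimensional NVM corresponding to $M$, so $\Gamma (B)(s)=M(s)(B)$. For each $i$ and each $A\in\bscript (\real )$, evaluating at an arbitrary $s\in\kscript$ gives
\[
\Gamma (\real\times\cdots\times A\times\cdots\times\real )(s)=M(s)(\real\times\cdots\times A\times\cdots\times\real )=M_i(s)(A)=\Gamma _i(A)(s),
\]
with $A$ in the $i$th slot; by the uniqueness fact, $\Gamma (\real\times\cdots\times A\times\cdots\times\real )=\Gamma _i(A)$ in $\vscript ^*$. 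Hence $\Gamma$ is a joint $n$-dimensional NVM for $\brac{\Gamma _1,\ldots ,\Gamma _n}$, which are therefore compatible.

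Conversely, suppose $\brac{\Gamma _1,\ldots ,\Gamma _n}$ is compatible with joint $n$-dimensional NVM $\Gamma$, and set $M(s)(B)=\Gamma (B)(s)$ for $B\in\bscript (\real ^n)$; by the $n$-dimensional form of Theorem~\ref{thm41}, $M\in\oscript _n(\kscript )$. For each $i$, $A\in\bscript (\real )$, $s\in\kscript$,
\[
M(s)(\real\times\cdots\times A\times\cdots\times\real )=\Gamma (\real\times\cdots\times A\times\cdots\times\real )(s)=\Gamma _i(A)(s)=M_i(s)(A),
\]
so $M$ is a joint observable for $\brac{M_1,\ldots ,M_n}$ and they are compatible.

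The content is essentially bookkeeping through the identity $\Gamma (B)(s)=M(s)(B)$; the only point requiring a word of care is that Theorem~\ref{thm41} and its converse were stated in dimension $1$, so the proof should first record that both extend unchanged to the $n$-dimensional setting. I expect that to be the main — and only mild — obstacle; the remainder is a direct translation between the two marginal conditions.
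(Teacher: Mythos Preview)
Your proposal is correct and is precisely the straightforward argument the paper has in mind; the paper does not write out a proof at all, remarking only that it is straightforward, and your bookkeeping through the identity $\Gamma (B)(s)=M(s)(B)$ together with the $n$-dimensional version of Theorem~\ref{thm41} and its converse is exactly what is intended.
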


\end{document}